\begin{document}
\newcommand{\rmnum}[1]{\romannumeral #1}

\newtheorem{theorem}{Theorem}[section]
\newtheorem{lemma}[theorem]{Lemma}

\def\bm{\boldsymbol}
\def\mb{\boldsymbol}
\def\bsigma{\bm{\sigma}}
\def\btau{\bm{\tau}}
\def\half{\frac 1 2}
\let\dsp=\displaystyle
\def\cal{\mathcal}
\title[Accurate Evaluations of Strain and Stress]
{Accurate Evaluations of Strain and Stress in Atomistic Simulations of Crystalline Solids}

\author{Jerry Zhijian Yang$^{1}\footnote{Author to whom any correspondence should be addressed.}
$, Xiaojie Wu$^2$, Xiantao Li$^2$}

\address{
$^1$School of Mathematics and Statistics, Wuhan University, Wuhan, Hubei, 430072, China\\
$^2$Department of Mathematics, the Pennsylvania State University, University Park, PA 16802, USA}
\ead{zjyang.math@whu.edu.cn}
\begin{abstract}
In this paper, we study the accuracy of Irving-Kirkwood type of formulas for the approximation of continuum quantities from atomistic simulations.  Such formulas are derived by expressing the displacement, deformation gradient and stress in terms of certain kernel functions. We propose two criteria for
choosing the kernel functions to significantly improve the sampling accuracy. We present a simple procedure to construct kernel functions that meet these criteria. Further, numerical tests on homogeneous and non-homogeneous systems provide validations for our analysis.
\end{abstract}

\maketitle

\section{Introduction}
\label{introduction}

Molecular models have undoubtedly been one of the most important methods for material simulations.
By following particle trajectories, molecular simulations generate a large amount of data that need to be
further processed to extract averaged quantities. For solid systems, of particular interest are quantities that
correspond to variables in elasticity models, e.g., displacement, strain and stress. Therefore, it is of great
practical interest to develop appropriate formulas that map particle trajectories to continuum variables
while still following fundamental principles in continuum mechanics.
For homogeneous systems, the stress can be obtained from the second law of thermodynamics
\cite{Clausius1870,Maxwell1870,maxwell1874,tsai1979,mclellan1974,rowlinson2003molecular,swenson1983,ziesche1988}. The stress obtained this way is known as the virial stress. For non-homogeneous systems,
the local stress (in space and time) can be defined following the seminal work of Irving and Kirkwood \cite{irving1950}, where point distributions were introduced to represent a continuous displacement and velocity field. The formula for the stress is derived based on conservation laws, which is consistent with
the continuum mechanics framework. A more practical approach was proposed by Hardy \cite{hardy1982}, where a smooth kernel function is used to replace the point distributions.
This approach leads to a spatially averaged stress,  called the {\it Hardy stress}, derived also from conservation laws. Similar work can be found in \cite{MuBe93,MuBe94,murdoch2005,Costanzo2005533,costanzo2004notion,Chen2003359,Chen2003377}. Recently, Yang et al \cite{yang2012} proposed to further generalized the formulas to involve the temporal and spatial averages simultaneously.

Zimmerman \cite{zimmerman2004calculation} discussed several practical issues with the calculation of the Hardy stress, including the choice of kernel functions and the sample radius.  However, to our knowledge, there has been no quantitative estimate of the sampling error.   In this paper, we analyze the spatial sampling error at zero temperature
and relate the error to certain moment conditions for the kernel function. Further, we suggest a simple
procedure to construct hybrid kernel functions to satisfy the moment conditions and to significantly improve
the sampling accuracy.

To study the accuracy within a standard numerical analysis framework, we have in mind a smooth function, with values only given at atomic positions. Our goal is then to reconstruct the smooth function with maximal order of accuracy. We employ a kernel function to form a set of basis functions  centered at each atom, and
use the values of the function as the nodal values. Using Taylor expansions, we have obtained the leading terms in the error. By choosing kernel functions for which the leading terms are small or zero, we maximize the order of accuracy. We will show that these kernel functions also lead to more accurate formulas for the stress calculation.

The rest of the paper is organized as follows. In section \ref{sec: derivation}, we show a simple derivation of the Hardy stress for molecular mechanics models. In section 3, we provide error estimate for the displacement, deformation gradient and the stress. Then in section 4, a procedure for constructing accurate kernel functions is presented. We show some numerical tests in section 5.

\section{The Irving-Kirkwood Type of Formulas}\label{sec: derivation}
In this section, we briefly describe how continuum quantities can be defined from the atomic positions. We will focus primarily on the formulas proposed by Hardy \cite{hardy1982}.  These formulas  can be derived from a molecular dynamics model based on the conservation of momentum and energy \cite{hardy1982}.  Here, we present a similar derivation starting with a molecular mechanics model,  given by
\begin{equation}\label{eq: ms}
  - \nabla_{\mb x_i}V + \mb b_i=0,
\end{equation}
where \(\mb b_i\) is a body force. We will let \(\mb f_i= - \nabla_{\mb x_i}V\) be
the inter-molecular force on atom \(i\), and we assume that \(\mb f_i\) admits the following decomposition,
\begin{equation}\label{eq: f-sum}
  \mb{f}_i= \sum_{j\ne i} \mb{f}_{ij},\quad \mb{f}_{ij}=-\mb{f}_{ji}.
\end{equation}

This leads to
\begin{equation}\label{eq: ms'}
   -\sum_{j\ne i} \mb f_{ij} = \mb b_i.
\end{equation}
Now we define a continuously distributed force,
\begin{equation}\label{eq: ik-b}
  \widetilde{\mb b}(\mb x)= \sum_j \mb b_j \varphi(\mb x-\mb x_j).
\end{equation}

This formula provides a means to sample the body force at any point in space. The kernel function is typically assumed to be symmetric with compact support, i.e.,
\begin{equation}\label{eq: symm}
 \varphi(\bm x)  = \varphi(-\bm x),
\end{equation}
and
\begin{equation}\label{eq: int=1}
 \int \varphi(\bm x) d\bm x =1.
\end{equation}
In practice, we may start with a non-dimensional function $\varphi_0(\bm x)$, and $\varphi_0(\bm x)=0$
when $|\bm x| >1$. Then we choose a sample radius $R_c$, and let \[
\varphi(\bm x) = \frac{1}{{R_c}^3} \varphi_0(\frac{\bm x}{R_c}).\]

\medskip

Multiplying the equation (\ref{eq: ms'}) by \(\varphi(\mb x-\mb x_j)\), we find that,
\begin{equation}\label{eq: elast}
 - \nabla \cdot \widetilde{ \bsigma} = \mb b(\mb x),
\end{equation}
where the expression \(\widetilde{\bsigma}\) is given by,
\begin{equation}\label{eq: hardy}
\widetilde{\bsigma}(\mb x)= -\frac{1}{2}\sum_i \sum_{j\neq i}\mb{f}_{ij} \otimes \mb{x}_{ij} b_{ij}(\mb x),
\end{equation}
and,
\begin{equation}\label{eq: bij}
  b_{ij}(\mb{x})= \dsp\int_0^1 \varphi\left(\mb{x}-(\mb{x}_i-\lambda \mb{x}_{ij})\right)d\lambda.
\end{equation}
Here $\mb{x}_{ij}=\mb{x}_{i}-\mb{x}_{j}.$
The tensor  \(\widetilde{\bsigma}\)  is identified as the first Piola-Kirchhoff stress since equation (\ref{eq: elast})
has the same form as the continuum elastostatics model. The atomic expression of the stress (\ref{eq: hardy})  is in  the same form as the Hardy stress derived from molecular dynamics models \cite{hardy1982,root2003,hardy2002}.

However, this formulation only involves body forces and stress. The formula for the
displacement is not part of this formulation. One may use a formula similar to (\ref{eq: ik-b}) to
define the local displacement. However, it turns out that the formula is inconsistent with the continuum
elastodynamics model. In this case,  we have to rely on the Irving-Kirkwood formalism,  where the {\it local} velocity is given by,
\begin{equation}\label{eq: ik-v}
  \widetilde{\mb v}(\mb x,t)= \frac{1}{\widetilde{\rho}} \sum_i m_i \mb v_i(t) \varphi(\mb x-\mb x_i),
\end{equation}
in which,
\begin{equation}\label{eq: rho}
  {\widetilde{\rho}}(\mb x)= \sum_i m_i \varphi(\mb x-\mb x_i).
\end{equation}

In order to be consistent with the definition of local velocity, we define
the displacement as follows,
\begin{equation}\label{eq: exp-disp}
  \widetilde{\mb u}(\mb x,t)= \frac{1}{\widetilde{\rho}} \sum_i m_i \mb u_i(t) \varphi(\mb x- \mb x_i).
\end{equation}
As a result, we have,
\(  \frac{\partial}{\partial t} \widetilde{\mb u}=  \widetilde{\mb v}.\)
In addition, by taking the time derivative of (\ref{eq: ik-v}), we obtain an equation that resembles the elastodynamics equation,
\begin{equation}
  \widetilde{\rho}\frac{\partial^2}{\partial t^2} \widetilde{\mb u} = \nabla \cdot \widetilde{\bsigma}.
\end{equation}
Here $\widetilde{\bsigma}$ is also given by (\ref{eq: elast}).

The displacement and the deformation gradient are given respectively by,
\begin{equation}\label{eq: I-K-eps'}
\left\{
\eqalign{
\widetilde{\mb u}(\mb x) &= \frac{  \widetilde{\mb q}(\mb x)}{\widetilde{\rho}(\mb x)}, \cr
\nabla \widetilde{\mb u}(\mb x)&=\frac{\nabla \widetilde{\mb q}}{\tilde{\rho}} - \frac{\widetilde{\mb q}\otimes \nabla \widetilde{\rho}}{\widetilde{\rho}^2},
}
\right.
\end{equation}

where
\begin{equation}\label{eq: q}
  \widetilde{\mb q}(\mb x)= \sum_i m_i \mb u_i \varphi(\mb x- \mb x_i).
\end{equation}

In this work, we are interested in the case when the system is made of a single crystal, for which \(m_i=m\),
and the underlying lattice is a simple lattice. This is the case where the accuracy  can be significantly improved.

\section{Error Estimate}

We now turn to the error estimate of the formulas presented in the previous section. These estimates
will be summarized as several theorems. These theorems are presented not for the purpose of mathematical
rigor, but for the purpose of clarity.

\subsection{Approximation of a function at the lattice points}

We notice that both of the formulas (\ref{eq: rho}) and (\ref{eq: q}) are in the following form,
\begin{equation}\label{eq: w}
\rho_0\widetilde{\mb w}(\mb x)=\sum_j \mb w_j\varphi(\mb x-\mb x_j).
\end{equation}
Here the function $\varphi$ can be considered as a regularization of the delta function to smoothly
sample data over a set of discrete points. \(\rho_0\) is
the density in the reference lattice, \(\rho_0= \Omega_0^{-1}\), with \(\Omega_0\) being
the volume of the unit cell.
We will always assume that \(\varphi\) is $C^1$ with compact support, and it satisfies (\ref{eq: symm})
and (\ref{eq: int=1}).

To formulate the problem as an interpolation problem in standard numerical analysis, we
suppose that $\mb w_i$ corresponds to a smooth function, for instance, $\mb w_i=\mb w(\mb x_i)$, where \(\mb w(\mb x)\), typically unknown, has enough smoothness.
In addition, the lattice spacing, denoted by $\epsilon$, is regarded as a small parameter.  It means that over the atomic scale, $\bm w$ is a slowly varying function. In this case, we wish to be able to reproduce $\mb w(\mb x_i)$. Namely \(\widetilde{\mb w}(\mb x)\approx \mb w(\mb x)\) with high accuracy. To this end, we first expand \(\mb w(\bm x)\) in Taylor series,
\begin{equation}\label{eq: taylor}
\eqalign{
{\mb w}(\mb x_j)&=\mb w(\mb x_i)+(\mb x_j-\mb x_i)\cdot\nabla \mb w(\mb x_i)\\
&+\frac{1}{2}((\mb x_j-\mb x_i)\cdot\nabla)^2 \bm w(\mb x_i)+O(\epsilon^3)
}
\end{equation}
Combining the equations (\ref{eq: w}) and (\ref{eq: taylor}) and collecting terms,  we found
\begin{equation}
\eqalign{
\rho_0\widetilde{\mb w}(\mb x_i)&=\sum_j\varphi(\mb x_i-\mb x_j)\mb w(\mb x_i)\\
&+\frac{1}{2}\sum_j\varphi(\mb x_i-\mb x_j)\bigl((\mb x_i-\mb x_j)\cdot\nabla \bigr)^2 \mb w(\mb x_i)+O(\epsilon^4).
}
\end{equation}

In such an expansion, we can eliminate terms of the following form,
\[ \sum_j \varphi(\mb x_i-\mb x_j) (\mb x_i - \mb x_j)^n,\]
for odd powers of \(n\). This is due to the symmetry of the lattice and the kernel function \(\varphi\).

Using the translational symmetry, we may define the {\it moments},
\begin{equation}
m_0=\sum_j\varphi(\mb x_j),\quad m_2^{\alpha,\beta}=\sum_j \mb x_j^\alpha \mb x_j^\beta \varphi(\mb x_j).
\end{equation}
Here $\mb x_j^\alpha$ denotes a component of the coordinate of the $j$th atom.
As a result we have,
\begin{equation}\label{eq: disp_taylor}
\rho_0\widetilde{\mb w}(\mb x_i)= m_0 \mb w(\mb x_i)+\frac{1}{2} m_2:\nabla^2 \mb w(\mb x_i)+O(\epsilon^4).
\end{equation}

This leads to the following error estimate.
\begin{theorem}
  Suppose that \(\mb w\in C^2\), and
  \begin{equation}\label{eq: mc1}
    m_0= \rho_0,
 \end{equation}
then,
\begin{equation}
  \widetilde{\mb w}(\mb x_i) = \mb w(\mb x_i) + \mathcal{O}(\epsilon^2).
\end{equation}
If we further assume that \(\mb w\in C^4\) and that
\begin{equation}\label{eq: mc2}
m_2= 0,
\end{equation}
then,
\begin{equation}
  \widetilde{\mb w}(\mb x_i) = \mb w(\mb x_i) + \mathcal{O}(\epsilon^4).
\end{equation}
\end{theorem}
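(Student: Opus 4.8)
The plan is to read both estimates off the moment expansion (\ref{eq: disp_taylor}), which was already assembled from the Taylor series (\ref{eq: taylor}); the substance lies not in that expansion itself but in two supporting points: the symmetry argument that annihilates the odd-order moments, and the bookkeeping of the powers of $\epsilon$ carried by the moments, which the division by $\rho_0$ converts into the stated orders of accuracy.

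First I would handle the $C^2$ case, where only (\ref{eq: mc1}) is available, by truncating (\ref{eq: taylor}) after the first-order term: $\mb w(\mb x_j)=\mb w(\mb x_i)+(\mb x_j-\mb x_i)\cdot\nabla\mb w(\mb x_i)+r_{ij}$ with $|r_{ij}|=O(|\mb x_i-\mb x_j|^2)$, which is all that $\mb w\in C^2$ affords. Substituting into (\ref{eq: w}) yields the zeroth-order term $m_0\mb w(\mb x_i)$, a first-moment term $\bigl(\sum_j\varphi(\mb x_i-\mb x_j)(\mb x_j-\mb x_i)\bigr)\cdot\nabla\mb w(\mb x_i)$, and the remainder $\sum_j\varphi(\mb x_i-\mb x_j)r_{ij}$. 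The first-moment term vanishes identically: by translational symmetry $\{\mb x_j-\mb x_i\}$ is the reference lattice recentred at the origin, and pairing each lattice vector with its negative, together with the evenness (\ref{eq: symm}) of $\varphi$, cancels the sum. Using (\ref{eq: mc1}) and dividing by $\rho_0$ gives $\widetilde{\mb w}(\mb x_i)=\mb w(\mb x_i)+\rho_0^{-1}\sum_j\varphi(\mb x_i-\mb x_j)r_{ij}$; since $\varphi=O(R_c^{-3})$ on a support of radius $R_c=O(\epsilon)$, the remainder sum is $O(\rho_0\epsilon^2)$, and the claim $\widetilde{\mb w}(\mb x_i)=\mb w(\mb x_i)+O(\epsilon^2)$ follows.

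For the $C^4$ case I would use (\ref{eq: disp_taylor}) directly, noting that the first- and third-order moments have already been removed there by the same reflection-symmetry argument. Dividing that identity by $\rho_0$ gives $\widetilde{\mb w}(\mb x_i)=(m_0/\rho_0)\,\mb w(\mb x_i)+\half(m_2/\rho_0):\nabla^2\mb w(\mb x_i)+O(\epsilon^4)$. Imposing (\ref{eq: mc1}) makes the first coefficient exactly unity, and imposing (\ref{eq: mc2}) kills the second-order term outright, so only the $O(\epsilon^4)$ remainder is left and $\widetilde{\mb w}(\mb x_i)=\mb w(\mb x_i)+O(\epsilon^4)$.

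The step I expect to demand the most care is the $\epsilon$-bookkeeping, because the moments are not $O(1)$. With $R_c=O(\epsilon)$ one has $m_0=O(\epsilon^{-3})$ and $m_2=O(\epsilon^{-1})$, while the leading uncancelled (fourth-order) contribution to $\rho_0\widetilde{\mb w}(\mb x_i)$ has absolute size $O(\epsilon)$; dividing by $\rho_0=\Omega_0^{-1}=O(\epsilon^{-3})$ then sends $m_2/\rho_0$ to $O(\epsilon^2)$ and the remainder to $O(\epsilon^4)$, exactly the orders in the two conclusions. Finally, one should record that the vanishing of the odd moments presumes $\mb x_i$ lies in the bulk of a simple lattice, so that $\{\mb x_j-\mb x_i\}$ retains its inversion symmetry; atoms near a free surface would spoil this cancellation and are tacitly excluded.
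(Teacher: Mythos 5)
Your proof is correct and takes essentially the same route as the paper, whose (implicit) proof of this theorem is exactly the preceding derivation: Taylor-expand $\mb w$ about $\mb x_i$, kill the odd-order moments by the inversion symmetry of the lattice and the evenness of $\varphi$, and read the two conclusions off the resulting expansion $\rho_0\widetilde{\mb w}(\mb x_i)=m_0\mb w(\mb x_i)+\half m_2:\nabla^2\mb w(\mb x_i)+\mathcal{O}(\epsilon^4)$ under the moment conditions. If anything you are more careful than the text, in truncating the expansion at first order for the $C^2$ case and in making explicit the scalings $m_0=\mathcal{O}(\epsilon^{-3})$, $m_2=\mathcal{O}(\epsilon^{-1})$ that convert the absolute remainders into the stated orders after division by $\rho_0$.
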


The conditions (\ref{eq: mc1}) and (\ref{eq: mc2}) will be referred to as {\it moment conditions}.
Although the first moment condition (\ref{eq: mc1}) can be viewed as a quadrature approximation of the equation
(\ref{eq: int=1}), in general (\ref{eq: int=1}) does not imply that the first moment condition is exactly satisfied. This has been observed in \cite{fu2013on, fu2013modification}.
In practice, when the sample radius $R_c$ is small, $m_0$ may not be close to $\rho_0$, and the lemma
indicates that large error may occur.

\subsection{Approximation of the gradient}

Taking the gradient of (\ref{eq: w}), we find that,
\begin{equation}\label{eq: w-grad}
\rho_0\nabla \widetilde{\mb w}(\mb x)=\sum_j \mb w_j \nabla\varphi(\mb x-\mb x_j).
\end{equation}
Inserting (\ref{eq: taylor}) into the equation above, we find that,
\begin{equation}
\eqalign{
\rho_0\nabla \widetilde{\mb w}(\mb x_i)&= {\mb w}(\mb x_i)\sum_j \nabla \varphi(\mb x_i-\mb x_j)\\
&+\sum_j \nabla \varphi(\mb x_i-\mb x_j) \otimes (\mb x_j-\mb x_i) \nabla \mb w(\mb x_i) + \mathcal{O}(\epsilon^3).
}
\end{equation}
We can eliminate the first term  using
the antisymmetry of $\nabla \varphi$. Using the translational symmetry, we define another moment,
\begin{equation}\label{eq: mu1}
  \mu_1^{\alpha, \beta}= -\sum_j\partial_{\mb x^\alpha} \varphi(\mb x_j) \mb x_j^\beta.
\end{equation}
This leads to:
\begin{lemma}\label{lem}
Suppose that \(\mb w\in C^3(\mathbb{R})\),  and
\begin{equation}\label{eq: mc3}
\mu_1= \rho_0 {\rm I},
\end{equation}
where $\rm I$ is the identity matrix, then
\begin{equation}
  \nabla \widetilde{\mb w}(\mb x_i) = \nabla \mb w(\mb x_i) + \mathcal{O}(\epsilon^3).
\end{equation}
\end{lemma}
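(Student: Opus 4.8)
The plan is to mirror the argument behind the preceding theorem, now applied to the differentiated sampling formula. I would start from (\ref{eq: w-grad}), substitute the Taylor expansion (\ref{eq: taylor}) of $\mb w$ about $\mb x_i$, and group the result by the order of the derivatives of $\mb w$ that appear. This produces a zeroth-order term proportional to $\mb w(\mb x_i)$, a first-order term contracting $\nabla\mb w(\mb x_i)$ against a lattice sum of $\nabla\varphi\otimes(\mb x_j-\mb x_i)$, a second-order term, and a Taylor remainder.

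The zeroth-order term is $\mb w(\mb x_i)\sum_j\nabla\varphi(\mb x_i-\mb x_j)$, and I would show it vanishes exactly: since $\varphi$ is even by (\ref{eq: symm}), its gradient $\nabla\varphi$ is odd, and because the reference lattice is centrosymmetric, the lattice sum of an odd function over the lattice is zero. The first-order term I would recenter using the translational symmetry so that it reads $\mu_1\,\nabla\mb w(\mb x_i)$, with $\mu_1$ precisely the moment defined in (\ref{eq: mu1}); the hypothesis (\ref{eq: mc3}), $\mu_1=\rho_0{\rm I}$, then collapses this to $\rho_0\nabla\mb w(\mb x_i)$, which after dividing by $\rho_0$ is the desired leading term.

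The accuracy then hinges on the remaining terms, and the key is a parity cancellation: the second-order term carries the lattice sum of $\nabla\varphi(\mb x_i-\mb x_j)\otimes(\mb x_j-\mb x_i)\otimes(\mb x_j-\mb x_i)$, whose summand is odd (an odd $\nabla\varphi$ times an even quadratic), so it again vanishes over the centrosymmetric lattice. Consequently the leading nonzero correction is pushed to higher order, and what survives is governed by the third-order Taylor remainder, which exists under the assumed $C^3$ regularity. Bounding this remainder using the compact support and decay of $\nabla\varphi$ then yields the stated estimate.

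The main obstacle is the power counting that fixes the exponent in the error. Each factor of $\nabla\varphi$ carries an extra inverse power of the sample radius relative to $\varphi$, so converting a lattice sum of $\nabla\varphi$ against a monomial in $(\mb x_j-\mb x_i)$ into a clean power of $\epsilon$ requires tracking the scaling of $\nabla\varphi$, the degree of the monomial, and the normalization $\rho_0$ together. The parity cancellation of the even-order terms is what buys accuracy beyond the naive estimate; the subtle step is that the third-order remainder has its derivative evaluated at an intermediate point, so the clean symmetry argument does not apply to it verbatim. I would therefore separate its leading, symmetric contribution, which sets the order, from a genuinely higher-order piece, and only then read off the final bound.
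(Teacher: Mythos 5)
Your proposal follows the paper's own route essentially verbatim: differentiate the sampling formula to get (\ref{eq: w-grad}), insert the Taylor expansion (\ref{eq: taylor}), kill the zeroth-order term by the antisymmetry of $\nabla\varphi$ over the centrosymmetric lattice, and identify the first-order lattice sum with the moment $\mu_1$ of (\ref{eq: mu1}) so that the hypothesis (\ref{eq: mc3}) yields the leading term $\rho_0\nabla\mb w(\mb x_i)$. The only additions beyond the paper's (very terse) argument are your explicit parity cancellation of the second-order term and your care with the intermediate-point remainder and the $\nabla\varphi$ scaling, both of which the paper leaves implicit in its $\mathcal{O}(\epsilon^3)$ bookkeeping.
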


Enforcing (\ref{eq: mc3}) in addition to the moment conditions (\ref{eq: mc1}) and (\ref{eq: mc2})
would certainly impose too many constraints on the kernel function. Notice that $\mu_1$ corresponds to the integral $\int \nabla \varphi(\bm x) \otimes \bm x d\bm x$. In fact, the condition (\ref{eq: mc3}) is a discrete analog of the identity,
\begin{equation}
\int \nabla \varphi(\bm x) \otimes \bm x d\bm x = - \rm I,
\end{equation}
which can be derived from (\ref{eq: int=1}). Therefore  the moment conditions (\ref{eq: mc1}) and (\ref{eq: mc3}) are not independent constraints.

To illustrate the connection between the moment conditions (\ref{eq: mc1}) and (\ref{eq: mc3}), we
express each lattice point in terms of three basis vectors, $\bm b_1$, $\bm b_2$ and $\bm b_3$,
\begin{equation}
\bm x = x_1 \bm b_1 + x_2 \bm b_2 + x_3 \bm b_3.
\end{equation}
Here $x_1= i \epsilon$, $x_2= j \epsilon$, and $x_3= k \epsilon$, with $i, j,$ and $k$ being integers.
We also need the reciprocal basis $\bm \xi_\ell$, with the orthogonality condition,
\begin{equation}
 \bm \xi_m \cdot \bm b_n = \delta_{m,n}, \quad m,n=1,2,3.
\end{equation}

Next we let $\psi(x_1, x_2, x_3)= \varphi(\bm x)$. A direct calculation yields,
\begin{equation}
\nabla \varphi(\bm x)= \partial_{x_1}\psi \bm \xi_1+\partial_{x_2}\psi \bm \xi_2+\partial_{x_3}\psi \bm \xi_3.
\end{equation}
 Therefore, the right hand side of (\ref{eq: mu1}) will contain terms in the form of,
 \[ \sum_i \sum_j \sum_k \partial_{x_m} \psi(x_1, x_2, x_3) x_n.\]
For example, we can estimate,
\begin{equation}
\eqalign{
&\sum_i \sum_j \sum_k \partial_{x_1} \psi(x_1, x_2, x_3) x_1\\
=& \epsilon  \sum_j \sum_k \sum_i \partial_{x_1} \psi(i \epsilon, j \epsilon, k\epsilon)  i \\
=&  \sum_j \sum_k \sum_i \big[\psi((i+1) \epsilon, j\epsilon, k \epsilon) - \psi(i \epsilon, j \epsilon, k\epsilon) \big] i \\
&-  \half  \epsilon^2 \sum_j \sum_k \sum_i \partial_{x_1^2}^2 \psi(i \epsilon, j \epsilon, k\epsilon)  i
 + \mathcal{O}(\epsilon^3).\\
=& - \sum_{\bm x} \varphi(\bm x) + \cal{O}(\epsilon^3).
}
\end{equation}
In the last step, we have applied a summation-by-parts along the $\bm b_1$ direction, and then switched
the summation back to the summation over $\bm x$. Further, the second term has been eliminated using
the symmetry of the lattice and the function $\varphi.$ Following the same procedure, one can show that
\[\sum_i \sum_j \sum_k \partial_{x_1} \psi(x_1, x_2, x_3) x_2 = \cal{O}(\epsilon^3).\]
Combining these results, we find that,
\begin{equation}
\mu_1 = -m_0 \rm I + \cal{O}(\epsilon^3).
\end{equation}
As a result, we have,
\begin{theorem}
If the condition (\ref{eq: mc3}) in Lemma \ref{lem} is replaced by the moment condition (\ref{eq: mc1}),
the same results hold.
\end{theorem}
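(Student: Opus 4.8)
The plan is to read this statement as a one-line corollary of Lemma \ref{lem} together with the summation-by-parts estimate established immediately above. First I would isolate the identity that actually drives Lemma \ref{lem}: inserting the Taylor expansion (\ref{eq: taylor}) into (\ref{eq: w-grad}) and discarding the leading term $\mb w(\mb x_i)\sum_j \nabla\varphi(\mb x_i-\mb x_j)$, which vanishes by the antisymmetry of $\nabla\varphi$ together with the lattice symmetry, leaves
\[
\rho_0 \nabla\widetilde{\mb w}(\mb x_i) = \mu_1\, \nabla\mb w(\mb x_i) + \mathcal{O}(\epsilon^3),
\]
with $\mu_1$ the first-derivative moment (\ref{eq: mu1}) and $\mb w\in C^3$ retained from the Lemma. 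This identity is independent of which moment condition we impose, so the entire argument reduces to controlling the matrix $\mu_1$.

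The key step is then to feed in the estimate derived just before the statement, which identifies $\mu_1$ with $m_0\,\rm I$ up to an $\mathcal{O}(\epsilon^3)$ defect (the sign being fixed by the convention in (\ref{eq: mu1}) so that it is consistent with (\ref{eq: mc3})). Imposing the zeroth moment condition (\ref{eq: mc1}), $m_0 = \rho_0$, therefore forces $\mu_1 = \rho_0\,\rm I + \mathcal{O}(\epsilon^3)$; in other words, enforcing (\ref{eq: mc1}) automatically reproduces the requirement (\ref{eq: mc3}) of Lemma \ref{lem} up to a higher-order error, so that (\ref{eq: mc3}) need never be imposed as a separate constraint. Writing $\mu_1 = \rho_0\,\rm I + \mb E$ with $\mb E = \mathcal{O}(\epsilon^3)$ and substituting into the displayed identity, the extra term $\mb E\,\nabla\mb w(\mb x_i)$ is $\mathcal{O}(\epsilon^3)$ because $\nabla\mb w(\mb x_i) = \mathcal{O}(1)$; dividing through by the constant $\rho_0$ preserves this order and yields $\nabla\widetilde{\mb w}(\mb x_i) = \nabla\mb w(\mb x_i) + \mathcal{O}(\epsilon^3)$, which is exactly the conclusion of Lemma \ref{lem}.

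The algebra above is routine; the only genuine obstacle is the estimate relating $\mu_1$ to $m_0\,\rm I$ that I am invoking, and it is precisely this estimate that is carried out in the calculation preceding the theorem. Its proof resolves $\nabla\varphi$ in the reciprocal basis $\bm\xi_\ell$, applies Abel summation along each lattice direction so that every diagonal sum $\sum \partial_{x_m}\psi\, x_m$ telescopes to $-\sum_{\bm x}\varphi(\bm x) = -m_0$, and then uses the symmetry of both the lattice and $\varphi$ to kill (i) the off-diagonal sums $\sum \partial_{x_m}\psi\, x_n$ with $m\ne n$ and (ii) the second-order finite-difference correction. The delicate point in that computation is the error bookkeeping: the lattice index multiplying each difference is of size $\mathcal{O}(\epsilon^{-1})$ over the support of $\varphi$, so the finite-difference remainders must be expanded one order beyond what a naive count suggests in order to secure the stated $\mathcal{O}(\epsilon^3)$ residual. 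With that estimate in hand, the theorem follows immediately.
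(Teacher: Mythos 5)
Your proposal is correct and follows essentially the same route as the paper: the paper's own justification of this theorem is precisely the summation-by-parts computation preceding the statement, which identifies $\mu_1$ with $m_0\,{\rm I}$ up to $\mathcal{O}(\epsilon^3)$ (modulo an apparent sign slip in the paper's displayed result, which you correctly normalize against the convention in (\ref{eq: mu1})), after which the conclusion follows by substituting into the expansion underlying Lemma \ref{lem}. You have merely made explicit the final substitution step that the paper leaves implicit in its ``As a result'' transition.
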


\subsection{Approximation of the stress}
In this section, we focus on the implication of the moment conditions to the accuracy of the calculated stress. In order to assess the accuracy, we need to define the {\it exact stress}. This is a concept subject to
a great deal of debate \cite{tadmor2010}. However, for a uniformly deformed system, with either infinite size or periodic boundary conditions, the stress is well defined, and it is given by the formula,
\begin{equation}
\bsigma= -\frac{1}{2\Omega_0} \sum_j \bm f_{ij}\otimes \bm x_{ij}.
\end{equation}
Here $\Omega_0$ is the volume of a primitive cell. Since the system is uniform, the stress does not depend on the choice of $i$. We have,

\begin{theorem}
For a system with uniform deformation gradient, one has the error estimate:
\begin{equation}\label{eq: th1}
\widetilde{\bsigma}(\mb x_i)=\bsigma+\mathcal{O}(\epsilon^2),
\end{equation}
under the condition (\ref{eq: mc1}).
\end{theorem}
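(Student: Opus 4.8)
The plan is to reduce the stress error to the scalar sampling error already analyzed in Section~3.1, and then to invoke the moment condition (\ref{eq: mc1}) together with the lattice symmetry. First I would exploit the uniformity of the deformation: because the lattice is simple and the deformation gradient is constant, the pair force $\mb f_{ij}$ and the bond vector $\mb x_{ij}$ depend only on the lattice translation relating the two atoms. Calling a pair with a given translation a bond type $\mb r$, and writing $\mb f(\mb r),\mb x(\mb r)$ for the corresponding constant force and bond vector, I would group the double sum in (\ref{eq: hardy}) by bond type,
\begin{equation}
\widetilde{\bsigma}(\mb x_k)=-\half\sum_{\mb r}\mb f(\mb r)\otimes\mb x(\mb r)\,\Big[\sum_i b_{i,i+\mb r}(\mb x_k)\Big].
\end{equation}
The single sum defining $\bsigma$ is itself a sum over the same bond types, so $\bsigma=-\half\rho_0\sum_{\mb r}\mb f(\mb r)\otimes\mb x(\mb r)$ (recall $\rho_0=\Omega_0^{-1}$ and that the per-atom stress is independent of the reference atom). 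Subtracting gives
\begin{equation}
\widetilde{\bsigma}(\mb x_k)-\bsigma=-\half\sum_{\mb r}\mb f(\mb r)\otimes\mb x(\mb r)\,\Big[\sum_i b_{i,i+\mb r}(\mb x_k)-\rho_0\Big].
\end{equation}
Since the interaction has finite range the sum over $\mb r$ is finite, so it suffices to show that the scalar weight satisfies $\sum_i b_{i,i+\mb r}(\mb x_k)=\rho_0+\mathcal O(\epsilon^2)$ for each bond type.

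Next I would rewrite the weight. Inserting (\ref{eq: bij}) and interchanging the finite sum with the integral,
\begin{equation}
\sum_i b_{i,i+\mb r}(\mb x_k)=\int_0^1\Phi(\lambda)\,d\lambda,\qquad \Phi(\lambda)=\sum_i\varphi\big(\mb x_k-\mb x_i+\lambda\,\mb x(\mb r)\big).
\end{equation}
The point of this step is that $\Phi(\lambda)$ is precisely the sampling sum of the constant field $\mb w\equiv 1$ evaluated at the shifted location $\mb x_k+\lambda\,\mb x(\mb r)$, i.e.\ the object treated by the expansion (\ref{eq: disp_taylor}). At the endpoints $\lambda=0$ and $\lambda=1$ the shift is by a lattice vector, so by translation invariance and the moment condition (\ref{eq: mc1}) one has $\Phi(0)=\Phi(1)=m_0=\rho_0$ exactly.

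Finally I would estimate the integral by Taylor expanding $\varphi$ about $\mb x_k-\mb x_i$ and summing over $i$. The constant term reproduces $m_0=\rho_0$ by (\ref{eq: mc1}); the term linear in $\lambda\,\mb x(\mb r)$ carries the factor $\sum_i\nabla\varphi(\mb x_k-\mb x_i)$, which vanishes by the antisymmetry of $\nabla\varphi$; and the remaining contribution is quadratic in the bond vector, hence $\mathcal O(\epsilon^2)$ since $|\mb x(\mb r)|=\mathcal O(\epsilon)$. Integrating over $\lambda\in[0,1]$ preserves this order, giving $\sum_i b_{i,i+\mb r}(\mb x_k)=\rho_0+\mathcal O(\epsilon^2)$, and substituting back yields (\ref{eq: th1}).

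The step I expect to be the main obstacle is the last one, for two related reasons. First, $\Phi(\lambda)$ must be controlled at interior points $0<\lambda<1$, which are \emph{not} lattice sites, whereas the cancellation of odd moments in Section~3.1 relied on evaluating at a center of symmetry. This is resolved by noting that $\Phi$ inherits two symmetries: it is $1$-periodic because $\mb x(\mb r)$ is a lattice vector, and it is even in $\lambda$ because $\varphi$ is even and $\mb x_k$ is a lattice site (reflecting $\mb x_i$ through $\mb x_k$); evenness kills the term linear in $\lambda$, so the surviving error is genuinely second order in the bond length. Second, one must check that the second-moment coefficient $\sum_i(\mb x(\mb r)\cdot\nabla)^2\varphi(\mb x_k-\mb x_i)$ does not degrade the order; this follows because it is the lattice sum of a second derivative of $\varphi$, whose leading Riemann value $\Omega_0^{-1}\!\int(\mb x(\mb r)\cdot\nabla)^2\varphi\,d\mb x$ vanishes, leaving a contribution that is $\mathcal O(\epsilon^2)$ relative to $\rho_0$. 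Together these give the claimed $\mathcal O(\epsilon^2)$ accuracy of the Hardy stress at a lattice point.
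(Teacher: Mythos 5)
Your proposal is correct and follows essentially the same route as the paper's proof: reduce to a single representative bond via uniformity, Taylor expand the kernel in the shift $\lambda\,\mb x_{0j}$ inside $b_{ij}$, use the moment condition (\ref{eq: mc1}) to identify the zeroth-order term with $\bsigma$, and kill the first-order term by the antisymmetry of $\nabla\varphi$ over the lattice. Your additional observations (the evenness and periodicity of $\Phi(\lambda)$, and the vanishing of the integral of $(\mb x(\mb r)\cdot\nabla)^2\varphi$) are sound refinements but not needed beyond what the paper's own argument already establishes.
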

\begin{proof}
We begin with the expression of the stress (\ref{eq: hardy}) and (\ref{eq: bij}). For a system with uniform deformation gradient, it can be easily verified that the inversion symmetry is still preserved.
In addition, the force $\bm f_{ij}$ only depends on the
relative position of the two atoms. Therefore, it is enough to consider the forces around a particular atom, say, $0$. Then we may rewrite the stress as,
\begin{equation}
\widetilde{\bsigma}(\bm x)= -\half \sum_j \bm f_{0j} \otimes \bm x_{0j} \sum_i \int_0^1 \varphi(\bm x-\bm x_i +\lambda  \bm x_{0j})d\lambda.
\end{equation}
Using the Taylor expansion,
\[
\eqalign{
\varphi(\bm x- \bm x_i + \lambda \bm x_{0j})&= \varphi(\bm x- \bm x_i) + \lambda \bm x_{0j}\cdot
\nabla \varphi(\bm x- \bm x_i) \\
&+ \half  \lambda^2 (\bm x_{0j}\cdot
\nabla \varphi)^2(\bm x- \bm x_i) +
\cdots,
}
\]
we get,
\[
\eqalign{
\widetilde{\bsigma}(\bm x)&= -\half \sum_j \bm f_{0j} \otimes \bm x_{0j} \sum_i \varphi(\bm x- \bm x_i)\\
&-\frac14 \sum_j \bm f_{0j} \otimes \bm x_{0j} \sum_i \bm x_{0j}\cdot
\nabla \varphi(\bm x-\bm x_i) + \mathcal{O}(\epsilon^2).
}
\]
If $\bm x$ is a lattice point, and the  moment condition (\ref{eq: mc1}) is satisfied, then,
\[  -\half \sum_j \bm f_{0j} \otimes \bm x_{0j}\sum_i \varphi(\bm x- \bm x_i) = -\frac{1}{2\Omega_0}
 \sum_j \bm f_{0j} \otimes \bm x_{0j} = \bsigma.\]
The second term on the right hand side becomes zero because of the inversion symmetry. This completes
the proof.
\end{proof}

\section{Constructing Hybrid Kernels with High Order of Accuracy}

Usually, the kernel functions suggested by previous works \cite{hardy1982,hardy2002,yang2012} do not satisfy the moment conditions exactly.
Therefore, the accuracy is not guaranteed. To ensure accuracy, we will construct hybrid kernel functions
from existing ones. Notice that the second moment condition (\ref{eq: mc2}) has 9 equations. However, for cubic crystalline systems, these equations can be significantly reduced,

\begin{theorem}
For cubic systems (b.c.c and f.c.c) the second  moment is $C\rm I$, where $C$ is a constant and $\rm I$ is the $3\times 3$ identity matrix. \end{theorem}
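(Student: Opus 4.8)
The plan is to exploit the point-group symmetry of the cubic lattice together with the corresponding symmetry of the kernel function. Recall that the second moment is defined componentwise by $m_2^{\alpha,\beta}=\sum_j \mb x_j^\alpha \mb x_j^\beta \varphi(\mb x_j)$, so the claim $m_2=C\,\rm I$ amounts to two separate assertions: that the off-diagonal entries vanish, and that the three diagonal entries coincide. I would establish each by a reindexing argument based on a suitable orthogonal transformation $g$ that (i) maps the lattice bijectively onto itself and (ii) leaves the kernel invariant, $\varphi(g\mb x)=\varphi(\mb x)$. Here I would make explicit the standing assumption — slightly stronger than the mere inversion symmetry (\ref{eq: symm}) — that $\varphi$ respects the full cubic point group; this holds, in particular, whenever $\varphi$ is radial, as is the case for the kernels used in practice.

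The central observation is that, for any such symmetry $g$, relabeling the summation index $\mb x_j \mapsto g\mb x_j$ (a bijection of the lattice onto itself) together with the invariance of $\varphi$ gives $\sum_j f(\mb x_j)\varphi(\mb x_j)=\sum_j f(g^{-1}\mb x_j)\varphi(\mb x_j)$ for any function $f$. First I would kill the off-diagonal terms: taking $g$ to be the reflection that sends $\mb x^\beta \mapsto -\mb x^\beta$ and fixes the remaining coordinates — an operation under which both the b.c.c.\ and f.c.c.\ lattices are invariant — and applying this identity with $f(\mb x)=\mb x^\alpha \mb x^\beta$ for $\alpha\neq\beta$, I obtain $m_2^{\alpha,\beta}=-m_2^{\alpha,\beta}$, forcing $m_2^{\alpha,\beta}=0$.

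Next I would show the diagonal entries agree. Taking $g$ to be the coordinate permutation interchanging the $\alpha$ and $\beta$ axes — again a symmetry of the cubic lattices — and applying the identity with $f(\mb x)=(\mb x^\alpha)^2$ yields $m_2^{\alpha,\alpha}=m_2^{\beta,\beta}$. Since $\alpha$ and $\beta$ are arbitrary, all three diagonal entries equal a common constant $C$, and combining with the previous step gives $m_2=C\,\rm I$.

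The only real obstacle is the verification of hypotheses (i) and (ii), rather than any computation. Point (i) is a standard fact about Bravais lattices: both b.c.c.\ and f.c.c.\ are invariant under the octahedral point group, which contains precisely the reflections and axis permutations used above, so I would simply invoke this. Point (ii) is the requirement that the kernel share this symmetry; I would flag it as the essential assumption, noting that the argument would fail for a kernel that is merely centrosymmetric but anisotropic, and that it is automatically satisfied by the radial kernels constructed in the next section.
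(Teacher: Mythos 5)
Your argument is correct and follows essentially the same route the paper indicates (and then omits for brevity): exploiting the invariance of the cubic lattice and of the kernel under the cubic point group, with reflections killing the off-diagonal entries of $m_2$ and coordinate permutations equating the diagonal ones. You also rightly make explicit the hypothesis the paper leaves implicit, namely that $\varphi$ must respect the full cubic point group rather than merely the inversion symmetry (\ref{eq: symm}), a condition satisfied by all the kernels listed in the paper.
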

This can be proved by expressing the coordinate of each atom in terms of the basis vectors, and
use the symmetry of the cubic lattice and the kernel function. For brevity, we will skip the proof.

We now describe the procedure to construct hybrid kernel functions. To begin with, let $\varphi_1$ and $\varphi_2$ be two kernel functions that satisfy (\ref{eq: int=1}) and the symmetry (\ref{eq: symm}).
We define,
\begin{equation}
\varphi_{\rm hybrid}=A_1\varphi_1+A_2\varphi_2.
\end{equation}
with the two constants $A_1$ and $A_2$ satisfying,
\begin{equation}\label{eq: hybrid1}
A_1 + A_2 =1,
\end{equation}
in order for $\varphi_{\rm hybrid}$ to satisfy (\ref{eq: int=1}). Further, to satisfy the moment condition (\ref{eq: mc1}) exactly,
we need,
\begin{equation}\label{eq: hybrid2}
A_1 \sum_i \varphi_1(\bm x_i) + A_2 \sum_i \varphi_2(\bm x_i) = \rho_0.
\end{equation}
These two coupled equations can be solved to obtain the two parameters.  A similar procedure can be
applied to incorporate the second moment condition (\ref{eq: mc2}).

For the purpose of this construction, it is important to have a large collection of independent kernel functions to work with. Some examples are provided here. They are already scaled properly to satisfy
(\ref{eq: int=1}).

\begin{enumerate}
\item{\bf Cubic spline function.}
\begin{equation}\label{eq: kern1}
  \varphi_0^{I}(\mb x)=
  \left\{
    \begin{array}{lc}
\dsp      \frac{15}{4\pi}\bigl[1+\left(2{r}-3\right)r^2\bigr],  &r\le 1,\\
      0,  &r>1.
    \end{array}
  \right.
\end{equation}
Here $r = \sqrt{x_1^2+x_2^2+x_3^3}$ is the length of the vector.

\item{\bf A regularized step function \cite{yang2012}.}
\begin{equation}\label{eq: kern2}
  \varphi_0^{II}(\mb x)=
  \left\{
    \begin{array}{lc}
\dsp  \frac{1}{c^{II}}\exp \frac{0.1}{r^2-1}, &r\le 1,\\
      0,  &r>1.
    \end{array}
  \right.
\end{equation}
Here $r=\sqrt{x_1^2+x_2^2+x_3^2}$, \(c^{II}\approx 2.77442\) is a normalization constant.

\item{\bf Cosine kernels \cite{hardy2002}}
\begin{equation}\label{eq: kern3}
  \varphi_0^{III}(\mb x)=
  \left\{
  \begin{array}{lc}
  \frac{1}{8}\prod_{i=1}^{3}\bigl(1+\cos \pi x_i\bigr),  &|x_i|\le 1,i=1,2,3,\\
    0,  &\dsp{otherwise}.
  \end{array}
  \right.
\end{equation}

\item{\bf Gaussian kernel \cite{hardy1982}}
\begin{equation}\label{eq:kern4}
  \varphi_0^{IV}(\mb x)=
  \left\{
  \begin{array}{lc}
  \bigl(\frac{3}{0.997\sqrt{2\pi}}\bigr)^3\bigl(e^{-\frac{9x_1^2+9x_2^2+9x_3^2}{2}}\bigr),  &|x_i|\le 1,i=1,2,3,\\
    0,  &\dsp{otherwise}.
  \end{array}
  \right.
\end{equation}

\item{\bf A truncated polynomial function \cite{root2003}}
\begin{equation}\label{eq:kern5}
  \varphi_0^{V}(\mb x)=
  \left\{
  \begin{array}{lc}
    \bigl(\frac{15}{16}\bigr)^3\prod_{i=1}^{3}\bigl(1-2x_i^2+x_i^4\bigr),  &|x_i|\le 1,i=1,2,3,\\
    0,  &\dsp{otherwise}.
  \end{array}
  \right.
\end{equation}
\end{enumerate}

\section{Numerical Experiments}

We first consider a displacement field given by the following elementary functions, defined
on a f.c.c. lattice, with lattice constant $\epsilon$,
\begin{equation*}
\eqalign{
g_0(\mb x)&=\frac{2\epsilon}{100},\\
g_1(\mb x)&=\frac{2\epsilon}{100}(\frac{2\mb x}{\epsilon}),\\
g_2(\mb x)&=\frac{2\epsilon}{100}(\frac{2\mb x}{\epsilon})^2,\\
g_3(\mb x)&=\frac{2\epsilon}{100}(\frac{2\mb x}{\epsilon})^3,\\
g_4(\mb x)&=\frac{2\epsilon}{100}(\frac{2\mb x}{\epsilon})^4,\\
g_5(\mb x)&=\frac{2\epsilon}{100}\sin(\frac{2\pi \mb x}{\epsilon}).
}
\end{equation*}
For each of these cases, we compute the approximation function $\widetilde{\mb w}(x)$ using
all the five kernel functions. The cut-off distance of the kernel function is set to be $R_c=4\epsilon$
and the simulation box is of the size $20\epsilon\times 20\epsilon \times 10\epsilon$. The error, measured in the maximum norm,  is listed in table \ref{tab:psi}.
We observe that starting from $g_2$, the approximation error becomes large. We have chosen $R_c$ large enough so that the first moment condition (\ref{eq: mc1}) is satisfied. Also shown in the table is
the second moment of each kernel. One can see that the moment is away from zero.
    \begin{table}[h]
    \begin{center}
      \caption{Error $\|\mb w(x)-\tilde{\mb w}(x)\|_{\infty}$ without the moment conditions (\ref{eq: mc2}). \label{tab:psi}}
      \begin{tabular}{cccccc}
        \toprule
        $\mb w$ &
        $\|\mb w - \tilde{\mb w}^{I}\|_{\infty}$&
        $\|\mb w - \tilde{\mb w}^{II}\|_{\infty}$&
        $\|\mb w - \tilde{\mb w}^{III}\|_{\infty}$&
        $\|\mb w - \tilde{\mb w}^{IV}\|_{\infty}$&
        $\|\mb w - \tilde{\mb w}^{V}\|_{\infty}$\\
        \midrule
        $g_0$       & 0.000000E+00     & 0.000000E+00    & 0.000000E+00 & 0.000000E+00 & 0.000000E+00\\
        $g_1$       & 0.288384E-11     & 0.352140E-11    & 0.313907E-11 & 0.288384E-11 & 0.243145E-11\\
        $g_2$     & 0.192054E+04     & 0.191935E+04    & 0.186910E+04 & 0.192054E+04 & 0.188184E+04\\
        $g_3$     & 0.134305E+06     & 0.134221E+06    & 0.130707E+06 & 0.134305E+06 & 0.131599E+06\\
        $g_3$     & 0.842426E+07     & 0.841901E+07    & 0.819845E+07 & 0.842426E+07 & 0.825439E+07\\
        $g_4$ & 0.265355E+03     & 0.265257E+03    & 0.261305E+03 & 0.265355E+03 & 0.262241E+03\\
        $m_2$     & 0.605382E+01     & 0.604951E+01    & 0.586577E+01 & 0.605382E+01 & 0.591210E+01\\
        \bottomrule
      \end{tabular}
      \end{center}
    \end{table}

We now construct a hybrid kernel function from   \(\varphi^{I}\)
and   \(\varphi^{II}\).  For an iron-alpha system with b.c.c. structure
and lattice spacing $\epsilon=2.865($\AA$)$, we have listed the results in table \ref{tab:u_error_bcc}.
For an aluminum system with f.c.c structure and lattice spacing $\epsilon=4.032($\AA$)$, the results are listed in table \ref{tab:u_error_fcc}. Clearly, from the first two rows, we can see that the error is nearly zero when the first moment condition (\ref{eq: mc1}) is satisfied,  confirming that each kernel function will exactly reproduce constant and linear functions.  From the third and forth rows in tables \ref{tab:psi}, \ref{tab:u_error_bcc} and \ref{tab:u_error_fcc}, we see that with the second moment condition (\ref{eq: mc2}), we are also able to approximate accurately quadratic and cubic functions.

  \begin{table}[htbp]
  \begin{center}
  \caption{Displacement error for a b.c.c. system.} \label{tab:u_error_bcc}
  \begin{tabular}{cccc}
    \toprule
    $\mb w$&
    $\|\mb w-\tilde{\mb w}^{Hybrid} \|_{\infty}$&
    $\|\mb w - \tilde{\mb w}^{I}\|_{\infty}$&
    $\|\mb w - \tilde{\mb w}^{III}\|_{\infty}$
    \\
    \midrule
    $g_0$       &    0.10824674E-14 & 0.11102230E-15 & 0.27755576E-16\\
    $g_1$       &    0.52735594E-15 & 0.16653345E-15 & 0.13877788E-15\\
    $g_2$       &    0.47184479E-15 & 0.13825538E-01 & 0.16859664E-01\\
    $g_3$       &    0.47184479E-15 & 0.13825538E-01 & 0.16859664E-01\\
    $g_4$       &    0.23260274E-02 & 0.85314910E-01 & 0.10454843E+00\\
    $g_5$       &    0.56716289E-03 & 0.16469294E-01 & 0.19959146E-01\\
    \bottomrule
  \end{tabular}
  \end{center}
  \end{table}

  \begin{table}[htbp]
  \begin{center}
  \caption{Displacement error for a f.c.c. system. \label{tab:u_error_fcc} }
  \begin{tabular}{cccc}
    \toprule

    $\mb w$&
    $\|\mb w-\tilde{\mb w}^{Hybrid} \|_{\infty}$&
    $\|\mb w - \tilde{\mb w}^{I}\|$&
    $\|\mb w - \tilde{\mb w}^{III}\|$
    \\
    \midrule
    $g_0$       &    0.15265567E-14 & 0.99920072E-15 & 0.10269563E-14\\
    $g_1$       &    0.74940054E-15 & 0.11379786E-14 & 0.49960036E-15\\
    $g_2$       &    0.63837824E-15 & 0.13831333E-01 & 0.16859664E-01\\
    $g_3$       &    0.47184479E-15 & 0.41493998E-01 & 0.50578993E-01\\
    $g_4$       &    0.23340091E-02 & 0.85350211E-01 & 0.10454843E+00\\
    $g_5$       &    0.56909837E-03 & 0.16476307E-01 & 0.19959146E-01\\
    \bottomrule
  \end{tabular}
  \end{center}
  \end{table}

\medskip

In the next numerical experiment, we focus only on the first moment condition (\ref{eq: mc1}). We construct hybrid kernel functions mentioned above.
We solve the linear system composed of (\ref{eq: hybrid1}) and (\ref{eq: hybrid2}). In the case  when the moments of $\varphi_1$ and  $\varphi_2$ are close, a combination may not satisfy the condition (\ref{eq: mc1}). In this case, we choose the hybrid kernel to be the one
whose moment is closer to $\rho_0$, see Figure \ref{fig:hybrid_mc0}. Using this hybrid kernel, we compute the stress of a b.c.c Fe system under uniform 1\% stretch. The results are displayed in Figure \ref{fig:hybrid_stress}.
Clearly the hybrid kernel gives much more accurate results, especially when the
sample radius $R_c>6.5\AA$. More importantly, the error in this regime is quite uniformly controlled, while the error from a single kernel function suffers from significant fluctuations.

\begin{figure}
  \centering
  \includegraphics[scale=0.45,angle=-90]{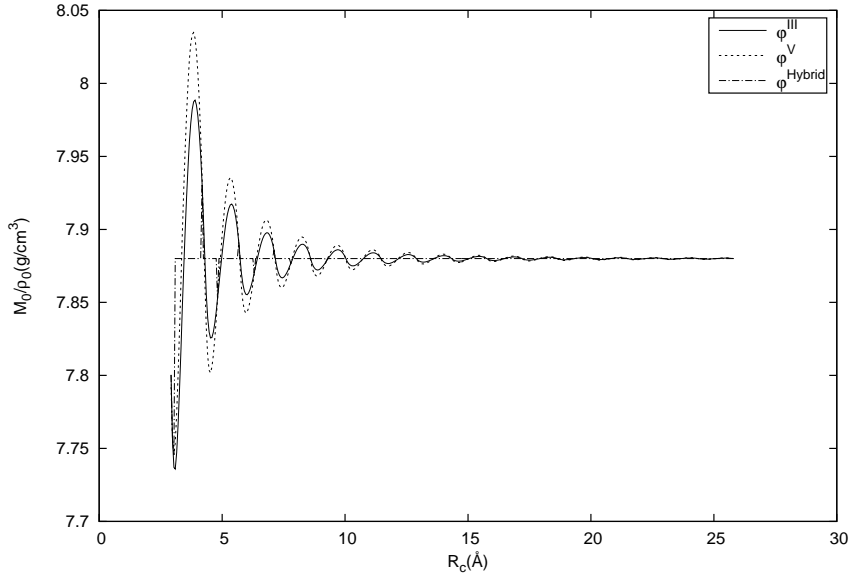}\\
  \caption{The first moment of $\varphi^{III}$ , $\varphi^{V}$ and
  the hybrid kernel based on $\varphi^{III}$ and $\varphi^{V}$ versus the sampling radius ($R_c$).}\label{fig:hybrid_mc0}
\end{figure}

\begin{figure}
  \centering
  \includegraphics[scale=0.45,angle=-90]{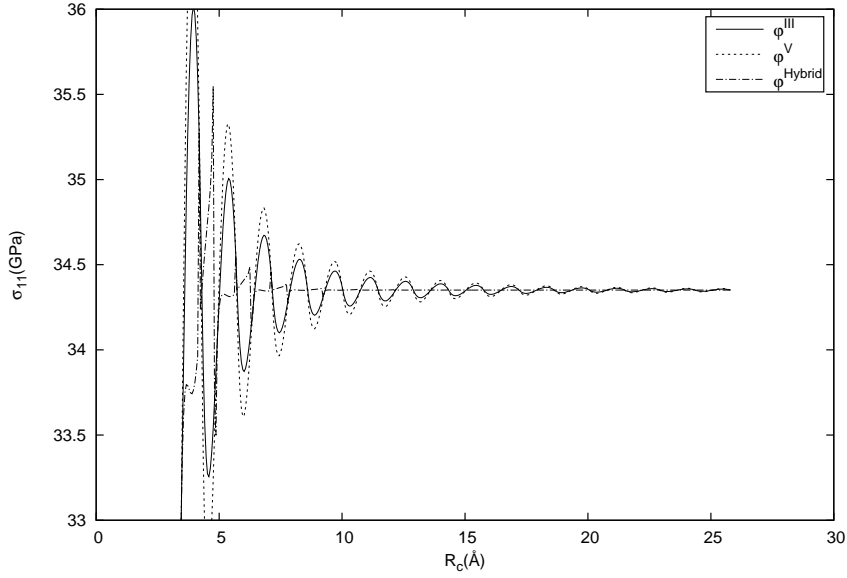}\\
  \caption{Stress for $\varphi^{III}$ , $\varphi^{V}$ and the hybrid kernel based on  $\varphi^{III}$ and $\varphi^{V}$ versus the sampling radius ($R_c$) with 1\% stretch in b.c.c Fe system}\label{fig:hybrid_stress}
\end{figure}

\medskip

To study the accuracy of the computed strain, we computed the approximate gradient and made comparisons with the exact values.
The results for the b.c.c system is shown in table \ref{tab:du_error_bcc}, and for the
f.c.c. system, the results are summarized in table \ref{tab:du_error_fcc}. In both cases,
the hybrid kernel is constructed from the first and third kernels, we see that the
hybrid approach has significantly reduced the approximation error.

  \begin{table}[htbp]
  \begin{center}
  \caption{Deformation gradient error for a b.c.c system.  \label{tab:du_error_bcc}}
  \begin{tabular}{cccc}
    \toprule
    $\mb w$&
    $\|\nabla \mb w-\nabla \widetilde{\mb w}^{Hybrid} \|_{\infty}$&
    $\|\nabla \mb w-\nabla \widetilde{\mb w}^{I}\|_{\infty}$&
    $\|\nabla \mb w-\nabla \widetilde{\mb w}^{III}_{\infty}\|$
    \\
    \midrule
    $g_0$       &    0.40661405E-16 & 0.28940627E-17 & 0.17948354E-17\\
    $g_1$       &    0.44929338E-15 & 0.49453648E-02 & 0.50644210E-02\\
    $g_2$       &    0.92287289E-15 & 0.98907297E-02 & 0.10128842E-01\\
    $g_3$       &    0.59230612E-02 & 0.13790267E-01 & 0.13979664E-01\\
    $g_4$       &    0.23692245E-01 & 0.15598148E-01 & 0.15403288E-01\\
    $g_5$       &    0.84147098E-02 & 0.84147098E-02 & 0.84147098E-02\\
    \bottomrule
  \end{tabular}
  \end{center}
  \end{table}

  \begin{table}[htbp]
  \begin{center}
  \caption{Error of the deformation gradient for a f.c.c system. \label{tab:du_error_fcc}  }
  \begin{tabular}{cccc}
    \toprule
    $\mb w$&
    $\| \nabla {\mb w}-\nabla \widetilde{\mb w}^{Hybrid} \|_{\infty}$&
    $\| \nabla {\mb w}-\nabla \widetilde{\mb w}^{I}\|_{\infty}$&
    $\| \nabla {\mb w}-\nabla \widetilde{\mb w}^{III}\|_{\infty}$
    \\
    \midrule
    $g_0$       &    0.27415087E-16 & 0.47228050E-17 & 0.24982699E-17\\
    $g_1$       &    0.36429193E-16 & 0.64767543E-05 & 0.12884199E-03\\
    $g_2$       &    0.34694470E-16 & 0.12953509E-04 & 0.25768398E-03\\
    $g_3$       &    0.20740369E-02 & 0.20757141E-02 & 0.20406720E-02\\
    $g_4$       &    0.82961474E-02 & 0.82510422E-02 & 0.91934239E-02\\
    $g_5$       &    0.84147098E-02 & 0.84147098E-02 & 0.84147098E-02\\
    \bottomrule
  \end{tabular}
  \end{center}
  \end{table}

\medskip

  Finally, we consider a crack in a b.c.c. crystal of iron-alpha. The system studied consists of a $3D$ rectangular sample, with three orthogonal axis being along $[110],[1\bar{1}0]$ and $[001]$ directions respectively. The crack is oriented along the $[001][110]$ directions.  The crack is initialized using the anisotropic linear elasticity solution \cite{Sih19671}.  This displacement field from the analytical expressions will be
  regarded as the exact solution. The interatomic potential is the embedded atom potential \cite{ShFa96}.
Based on the displacement of the atoms, we have computed the averaged displacement at each atomic position, and the error is listed in the table \ref{tab:crack_bcc}. Notice that in general, the averaged displacement given by the Hardy's formulas does not agree with the
actual displacement at that point, as discussed in section 3. The error is large when the displacement field is not smooth, as confirmed by Figure \ref{fig: crack-u}, in which large error is observed near the crack-tip. In contrast, the hybrid kernel function yields
much more accurate results.
Finally, in Figure \ref{fig: crack-s}, we show the stress computed using the original kernels and the hybrid kernel.

  \begin{table}[htbp]
  \begin{center}
    \caption{The comparison among $\varphi^{I}$, $\varphi^{III}$ and the hybrid kernel,  \\
 measured by different norms around a crack in a b.c.c. iron system.
    \label{tab:crack_bcc}}
    \begin{tabular}{cccc}
      \toprule
      Norm&
      $\varphi^{I}$&
      $\varphi^{III}$&
      Hybrid $\varphi^{I}$ and $\varphi^{III}$\\
      \midrule
      $\|\cdot\|_{1}$    & 2.306150E+00& 2.806700E+00& 2.640000E-02\\
      $\|\cdot\|_{2}$    & 5.257212E-02& 6.393728E-02& 2.032535E-03\\
      $\|\cdot\|_{\infty}$& 2.840000E-03& 3.380000E-03& 3.700000E-04\\
      \bottomrule
    \end{tabular}
    \end{center}
  \end{table}

  \begin{figure}\label{fig: crack-u}
  \begin{center}
    \includegraphics[scale=0.6]{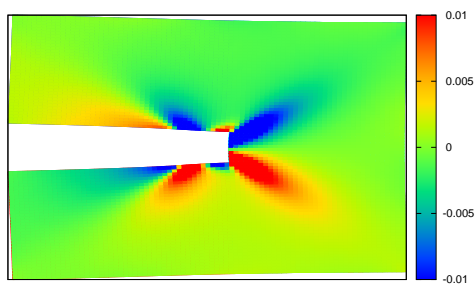}\\
    \includegraphics[scale=0.6]{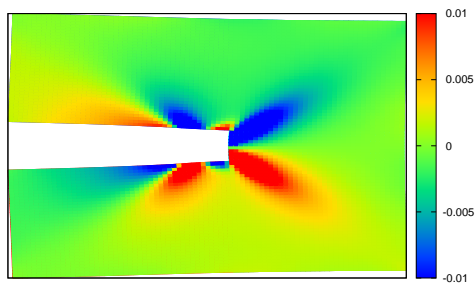}\\
    \includegraphics[scale=0.6]{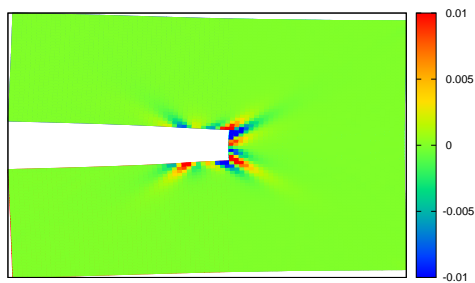}
    \caption{The error of the displacement ($u_2$) for the crack problem. From top to bottom: results computed from $\varphi_1$, $\varphi_2$, and the hybrid kernel.}
  \end{center}
  \end{figure}

  \begin{figure}\label{fig: crack-s}
  \begin{center}
    \includegraphics[scale=0.6]{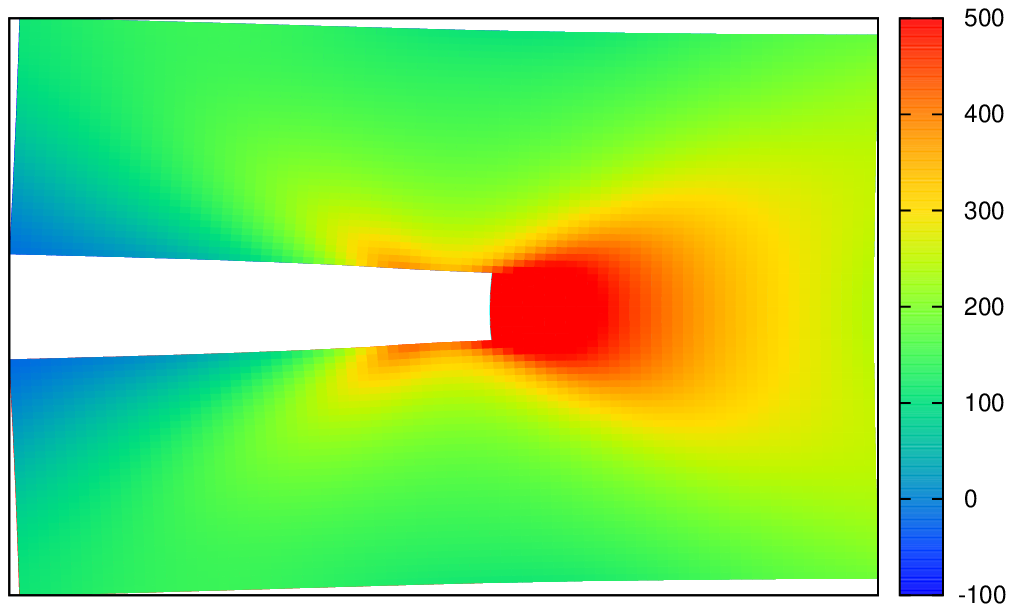}\\
    \includegraphics[scale=0.6]{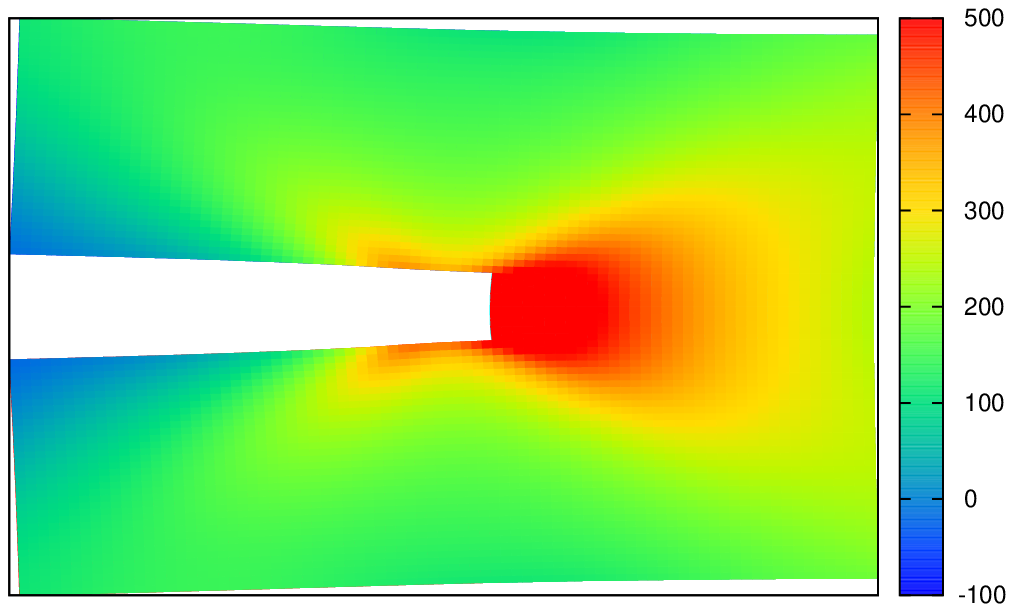}\\
    \includegraphics[scale=0.6]{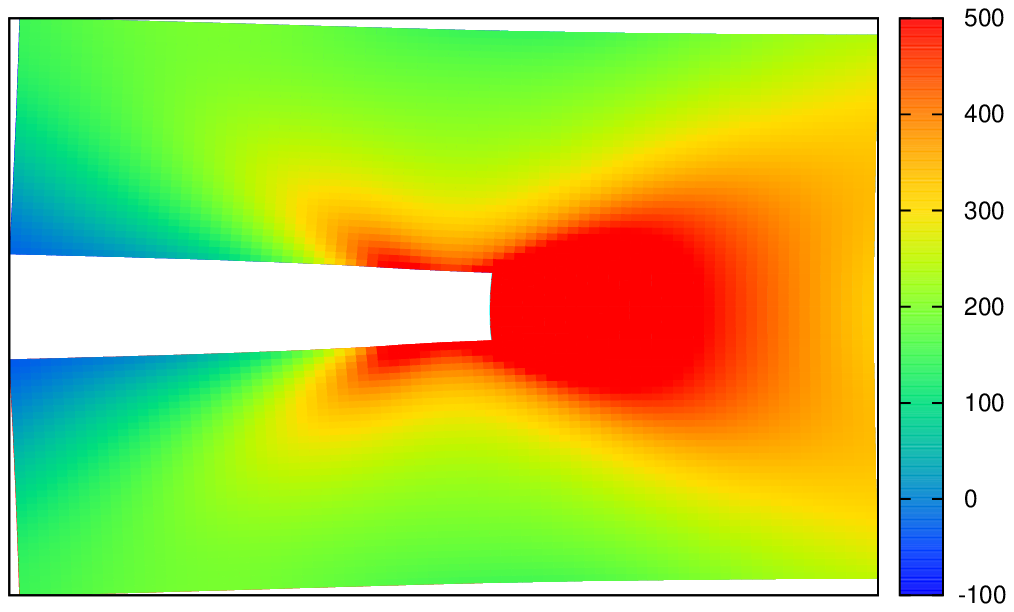}

    \caption{The sampled stress $\sigma_{11}$ in the crack problem. From top to bottom: $\varphi_1$, $\varphi_2$, and the hybrid kernel.}
    \end{center}
  \end{figure}

\section*{Acknowledgments}
The work of Yang was supported by NFSC under the grants 11001210 and 91230203.
The work of Li was initiated and partially fulfilled when
visiting Peking University and Beijing International Center for Mathematical Research. He would like to
express his thanks for the warm and stimulating environment they provided him during his visit.

\section*{References}

\end{document}